\newcommand{\fivespaces}{\;\;\;\;\;}
\newcommand{\tenspaces}{\fivespaces\fivespaces}
\newcommand{\twentyspaces}{\tenspaces\tenspaces}
\newcommand{\mylabel}[1]{\, \mathbf{(#1)}}
\newcommand{\Lp}{\stackrel{\mbox{\tiny{PEG}}}{\leadsto}}
\newcommand{\Lpp}{\stackrel{\mbox{\tiny{PEGF}}}{\leadsto}}
\newcommand{\Lpl}{\stackrel{\mbox{\tiny{PEGL}}}{\leadsto}}
\newcommand{\throw}{\Uparrow}
\newcommand{\Tup}[2]{(#1,\,#2)}
\newcommand{\Any}{X}
\newcommand{\Ex}{v?}
\newcommand{\Ey}{w?}
\newcommand{\Suf}{{\tt smallest}}
\newcommand{\Suff}[2]{\Suf(#1,\,#2)}
\newcommand{\Fail}{{\tt fail}}
\newcommand{\Nil}{{\tt nil}}
\newcommand{\List}{L}
\newcommand{\Mon}[1]{\{#1\}}
\newcommand{\J}{{\tt join}}
\newcommand{\Jj}[2]{\J(#1,\,#2)}
\newcommand{\Jv}{{\tt joinVar}}
\def\drawplusplus#1#2#3{\hbox to 0pt{\hbox to #1{\hfill\vrule height #3 depth
      0pt width #2\hfill\vrule height #3 depth 0pt width #2\hfill
      }}\vbox to #3{\vfill\hrule height #2 depth 0pt width
      #1 \vfill}}
\def\concat{\mathrel{\drawplusplus {12pt}{0.4pt}{8pt}}}
\newtheorem{proposition}{Proposition}[section]
\newtheorem{lemma}[proposition]{Lemma}
\newcommand{\Peg}[2]{#1[#2]}
\newcommand{\Pgg}[1]{\Peg{G}{#1}}
\newcommand{\Mat}[2]{#1\;\,#2\,}
\newcommand{\Matg}[2]{\Mat{\Pgg{#1}}{#2}}
\title{Error Reporting in Parsing Expression Grammars}
\journal{Science of Computer Programming}
\begin{document}

\begin{frontmatter}

\author{André Murbach Maidl}
\ead{andre.murbach@pucpr.br}
\address{Polytechnic School -- PUCPR --
Curitiba -- Brazil}

\author{Fabio Mascarenhas}
\ead{mascarenhas@ufrj.br}
\address{Department of Computer Science -- UFRJ --
Rio de Janeiro -- Brazil}

\author{Sérgio Medeiros}
\ead{sergiomedeiros@ect.ufrn.br}
\address{School of Science and Technology -- UFRN --
	Natal -- Brazil}

\author{Roberto Ierusalimschy}
\ead{roberto@inf.puc-rio.br}
\address{Department of Computer Science -- PUC-Rio --
Rio de Janeiro -- Brazil}

\begin{abstract}
Parsing Expression Grammars (PEGs) describe top-down parsers.
Unfortunately, the error-reporting techniques used in conventional top-down parsers do not directly apply to parsers based on Parsing Expression Grammars (PEGs), so they have to be somehow simulated.
While the PEG formalism has no account of semantic actions, actual
PEG implementations add them, and we show how to simulate an
error-reporting heuristic through these semantic actions.  

We also propose a complementary error reporting strategy
that may lead to better error messages: labeled failures.
This approach is inspired by exception handling of
programming languages, and lets a PEG define different
kinds of failure, with each ordered choice operator
specifying which kinds it catches. Labeled failures
give a way to annotate grammars for better
error reporting, to express some of the error
reporting strategies used by deterministic parser combinators,
and to encode predictive top-down parsing in a PEG.
\end{abstract}

\begin{keyword}
parsing \sep
error reporting \sep
parsing expression grammars \sep
packrat parsing \sep
parser combinators
\end{keyword}

\end{frontmatter}

\section{Introduction} \label{sec:intro}

When a parser receives an erroneous input,
it should indicate the existence of syntax errors.
Errors can be handled in various ways. The easiest is just to report that an error was found, where is was found and what was expected at that point and then abort. At the other end of the spectrum we find mechanisms that attempt to parse the complete input, and report as many errors as best as  possible.

The LL($k$) and LR($k$) methods detect syntax errors very efficiently
because they have the \emph{viable prefix} property, that is,
these methods detect a syntax error as soon as $k$ tokens are read
and cannot be used to extend the thus far accepted part of the input
into a viable prefix of the language
\cite{aho2006cpt}. LL($k$) and LR($k$) parsers can use this
property to produce suitable, though generic, error
messages.

Parsing Expression Grammars (PEGs)~\cite{ford2004peg} are a
formalism for describing the syntax of programming languages.
We can view a PEG as a formal description of a top-down parser
for the language it describes. PEGs have a concrete syntax based on
the syntax of {\em regexes}, or extended regular expressions.
Unlike Context-Free Grammars (CFGs),
PEGs avoid ambiguities in the definition of the grammar's
language due to the use of an {\em ordered choice} operator.

More specifically, a PEG can be interpreted as a the specification
of a recursive
descent parser with restricted (or local) backtracking.
This means that the alternatives of a choice are tried in
order; as soon as an alternative recognizes an input prefix,
no other alternative of this choice will be tried, but when an
alternative fails to recognize an input prefix, the parser
backtracks to try the next alternative.

On the one hand, PEGs can be interpreted as a formalization of a
specific class of top-down parsers~\cite{ford2004peg}; on the other hand,
PEGs cannot use error handling techniques that are often applied
to predictive top-down parsers, because these techniques assume the parser
reads the input without backtracking~\cite{ford2002packrat}.
In top-down parsers without backtracking, it is possible to
signal a syntax error as soon as the next input symbol cannot be accepted.
In PEGs, it is more complicated to identify the cause
of an error and the position where it occurs, because failures
during parsing are not necessarily errors, but just an
indication that the parser cannot proceed and a different
choice should be made \emph{elsewhere}.

Ford~\cite{ford2002packrat} has already identified this
limitation of error reporting in PEGs, and, in his
parser generators for PEGs, included an heuristic for
better error reporting. This heuristic simulates the error
reporting technique that is implemented in top-down parsers
without backtracking. The idea is to track the position
in the input where the farthest failure occurred, as
well as what the parser was expecting
at that point, and report this to the user in case of errors.

Tracking the farthest failure position and context
gives us PEGs that produce error messages
similar to the automatically produced error
messages of other top-down parsers; they tell
the user the position where the error was encountered,
what was found in the input at that position,
and what the parser was expecting to find.

In this paper, we show how grammar writers can use
this error reporting technique even in PEG implementations
that do not implement it, by making use of semantic
actions that expose the current position in the input
and the possibility to access some form of mutable state
associated with the parsing process. 

We also propose a complementary approach for error
reporting in PEGs, based on the concept of \emph{labeled
failures}, inspired by the standard exception handling
mechanisms as found in programming languages. Instead of just
failing, a labeled PEG can produce different kinds of failure
labels using a {\em throw} operator. Each label can be
tied to a more specific error message. PEGs can also
{\em catch} such labeled failures, via a change to
the ordered choice operator. We formalize labeled
failures as an extension of the semantics of regular
PEGs.

With labeled PEGs we can express some alternative error
reporting techniques for top-down parsers with local
backtracking. We can also encode
predictive parsing in a PEG, and we show
how to do that for $LL(*)$ parsing, a powerful predictive
parsing strategy.

The rest of this paper is organized as follows:
Section~\ref{sec:pegs} contextualizes the problem of error
handling in PEGs, explains in detail the failure tracking
heuristic, and shows how it can be realized in PEG implementations
that do not support it directly; Section~\ref{sec:rel}
discusses related work on error reporting for top-down
parsers with backtracking;
Section~\ref{sec:lf}
introduces and formalizes the concept of labeled failures,
and shows how to use it for error reporting;
Section~\ref{sec:labelsfft} compares the error messages
generated by a parser based on the failure tracking heuristic  
with the ones generated by a parser based on labeled failures;
Section~\ref{sec:labelsrelated} shows how labeled failures
can encode some of the techniques of Section~\ref{sec:rel},
as well as predictive parsing;
finally, Section~\ref{sec:conc} gives some concluding
remarks.

\section{Handling Syntax Errors with PEGs} 
\label{sec:pegs}

\begin{figure}[t]
\begin{align*}
\it{Tiny} & \leftarrow \it{CmdSeq}\\
\it{CmdSeq} & \leftarrow \it{(Cmd \; {\tt SEMICOLON}) \;
  (Cmd \; {\tt SEMICOLON})*}\\
\it{Cmd} & \leftarrow \it{IfCmd \; / \; RepeatCmd \; / \;
  AssignCmd \; / \; ReadCmd \; / \; WriteCmd}\\
\it{IfCmd} & \leftarrow \it{{\tt IF} \; Exp \; {\tt THEN} \; CmdSeq
  \; ({\tt ELSE} \; CmdSeq \; / \; \varepsilon) \; {\tt END}}\\
\it{RepeatCmd} & \leftarrow \it{{\tt REPEAT} \; CmdSeq \;
  {\tt UNTIL} \; Exp}\\
\it{AssignCmd} & \leftarrow \it{{\tt NAME} \; {\tt ASSIGNMENT} \; Exp}\\
\it{ReadCmd} & \leftarrow \it{{\tt READ} \; {\tt NAME}}\\
\it{WriteCmd} & \leftarrow \it{{\tt WRITE} \; Exp}\\
\it{Exp} & \leftarrow \it{SimpleExp \;
  (({\tt LESS} \; / \; {\tt EQUAL}) \; SimpleExp \; / \; \varepsilon})\\
\it{SimpleExp} & \leftarrow \it{Term \;
  (({\tt ADD} \; / \; {\tt SUB}) \; Term)*}\\
\it{Term} & \leftarrow \it{Factor \;
  (({\tt MUL} \; / \; {\tt DIV}) \; Factor)*}\\
\it{Factor} & \leftarrow \it{{\tt OPENPAR} \; Exp \; {\tt CLOSEPAR}
  \; / \; {\tt NUMBER} \; / \; {\tt NAME}}
\end{align*}
\caption{A PEG for the Tiny language}
\label{fig:tiny}
\end{figure}

In this section, we use examples to present in more detail how a
PEG behaves badly in the presence of syntax errors.
After that, we present a heuristic proposed by Ford
\cite{ford2002packrat} to implement error reporting in PEGs.
Rather than using the original notation and semantics of PEGs 
given by Ford~\cite{ford2004peg}, our examples use the
equivalent and more concise notation and semantics proposed
by Medeiros et al.~\cite{medeiros2011re2peg,medeiros2012left,mascarenhas2014}.
We will extend both the notation and the semantics in Section~\ref{sec:lf}
to present PEGs with labeled failures.

A PEG $G$ is a tuple $(V,T,P,p_{S})$ where $V$ is a finite set of
non-terminals, $T$ is a finite set of terminals, $P$ is a total
function from non-terminals to \emph{parsing expressions} and $p_{S}$
is the initial parsing expression.
We describe the function $P$ as a set of rules of the form
$A \leftarrow p$, where $A \in V$ and $p$ is a parsing
expression.
A parsing expression, when applied to an input string, either
fails or consumes a prefix of the input and returns the
remaining suffix. The abstract syntax of parsing expressions is
given as follows, where $a$ is a terminal, $A$ is a non-terminal,
and $p$, $p_1$ and $p_2$ are parsing expressions:
\[\it{
p = \varepsilon \; | \; a \; | \; A \; | \; p_1 p_2 \; | \;
  p_1 / p_2 \; | \; p* \; | \; !p
}\]

Intuitively,
$\varepsilon$ successfully matches the empty string, not changing
the input;
$a$ matches and consumes itself or fails otherwise;
$A$ tries to match the expression $P(A)$;
$p_1 p_2$ tries to match $p_1$ followed by $p_2$;
$p_1 / p_2$ tries to match $p_1$;
if $p_1$ fails, then it tries to match $p_2$;
$p*$ repeatedly matches $p$ until $p$ fails, that is, it
consumes as much as it can from the input;
the matching of $!p$ succeeds if the input does not match $p$
and fails when the the input matches $p$, not consuming any input in either case;
we call it the negative predicate or the lookahead predicate.

Figure \ref{fig:tiny} presents a PEG for the Tiny language~\cite{louden1997ccp}.
Tiny is a simple programming language with a syntax that resembles
Pascal's. 
We will use this PEG, which can be seen as the equivalent of an LL(1) CFG,
to show how error reporting differs between top-down parsers without backtracking
and PEGs.

PEGs usually express the language syntax at the character level,
without the need of a separate lexer.
For instance, we can write the lexical rule \texttt{IF} as follows,
assuming we have non-terminals {\tt Skip}, which consumes whitespace,
and {\tt IDRest}, which consumes any character that may
be present on a proper suffix of an identifier\footnote{In the presented PEG, we omitted the lexical rules for brevity.}:
\[
  {\tt IF} \leftarrow \it{if \; !IDRest \; Skip}
\]

Now, we present an example of erroneous Tiny code so we can compare
approaches for error reporting.
The program in Figure~\ref{fig:tinyerror} is missing a semicolon
(\texttt{;}) in the assignment in line \texttt{5}.
A predictive top-down parser that aborts on the first error presents
an error message like:

\begin{verbatim}
    factorial.tiny:6:1: syntax error, unexpected 'until', expecting ';'
\end{verbatim}

The error is reported in line \texttt{6} because the parser cannot
complete a valid prefix of the language, since it unexpectedly
finds the token \texttt{until} when it was expecting a command
terminator (\texttt{;}).

\begin{figure}[t]
\begin{verbatim}
 01  n := 5;
 02  f := 1;
 03  repeat
 04    f := f * n;
 05    n := n - 1
 06  until (n < 1);
 07  write f;
\end{verbatim}
\caption{Program for the Tiny Language with a Syntax Error}
\label{fig:tinyerror}
\end{figure}

In PEGs, we can try to report errors using the remaining suffix, but
this approach usually does not help the PEG produce an error
message like the one shown above.
In general, when a PEG finishes parsing the input, a remaining non-empty suffix means that parsing did not reach the end of file due to a syntax error. However, this suffix usually does not indicate the
actual place of the error, as the error will have caused the PEG
to backtrack to another place in the input.

In our example, the problem happens when the PEG tries to recognize
the sequence of commands inside the \texttt{repeat} command.
Even though the program has a missing semicolon (\texttt{;}) in the
assignment in line \texttt{5}, making the PEG fail to recognize
the sequence of commands inside the \texttt{repeat} command, this
failure is not treated as an error.
Instead, this failure makes the recognition of the \texttt{repeat}
command also fail.
For this reason, the PEG backtracks the input to line \texttt{3}
to try to parse other alternatives for {\em CmdSeq},
and since these do not exist, its ancestor {\em Cmd}.
Since it is not possible to recognize a command other than
\texttt{repeat} at line \texttt{3},
the parsing finishes without consuming all the input.
Hence, if the PEG uses the remaining suffix to produce an error
message, the PEG reports line 3 instead of line 6 as the location 
where no further progress can be made.

There is no
perfect method to identify which information is the most relevant to
report an error. In our example it is easy for the parser to
correctly report what the error is, but it is easy to construct examples
where this is not the case. If we add the semicolon
in the end of line 6 and remove line 3, a predictive top-down parser
would complain about finding an {\tt until} where it expected 
another statement, while the actual error is a missing {\tt repeat}.

According to Ford \cite{ford2002packrat}, using the information of the farthest position
that the PEG reached in the input is a heuristic that provides
good results.
PEGs define top-down parsers and try to recognize
the input from left to right, so the position farthest to
the right in the input that a PEG reaches during parsing usually is
close to the real error \cite{ford2002packrat}. The same idea
for error reporting in top-down parsings with backtracking was
also mentioned in Section 16.2 of~\cite{grune2010ptp}. 

Ford used this heuristic to add error reporting to his PEG implementation using packrat parsers~\cite{ford2002packrat}.
A packrat parser generated by Pappy \cite{ford2002pappy}, Ford's PEG
parser generator, tracks the farthest position and uses this position
to report an error.
In other words, this heuristic helps packrat parsers to simulate the
error reporting technique that is implemented in deterministic parsers.

Although Ford only has discussed his heuristic in relation to packrat
parsers, we can use the farthest position heuristic to add error reporting
to any implementation of PEGs that provides semantic actions.
The idea is to annotate the grammar with semantic actions that track
this position. While this seems onerous, we just need to add annotations
to all the lexical rules to implement error reporting.

For instance, in Leg \cite{leg}, a PEG parser generator with
Yacc-style semantic actions, we can annotate the rule for \texttt{SEMICOLON} as follows, where {\tt |} is Leg's ordered choice operator, and following it is a semantic action (in the notation used
by Leg): 
\begin{verbatim}
    SEMICOLON = ";" Skip | &{ updateffp() }
\end{verbatim}

The function \texttt{updateffp} that the semantic action calls updates
the farthest failure position in a global variable if the current
parsing position is greater than the position that is stored in
this global, then makes the whole action fail so parsing continues
as if the original failure had occurred.

However, storing just the farthest failure position does not give
the parser all the information it needs to produce an informative
error message.
That is, the parser has the information about the position where
the error happened, but it lacks the information about what terminals
failed at that position.
Thus, we extend our approach by including the terminals in the annotations
so the parser can also track these names in order to compute the set of
expected terminals at a certain position:

\begin{verbatim}
    SEMICOLON = ";" Skip | &{ updateffp(";") }
\end{verbatim}

The extended implementation of \texttt{updateffp} keeps,
for a given failure position, the names of all the symbols
expected there.
If the current position is greater than the farthest failure,
{\tt updateffp} initializes this set with just the given name.
If the current position equals the farthest failure, {\tt updateffp}
adds this name to the set.

Parsers generated by Pappy also track the set of expected terminals,
but with limitations.
The error messages include only symbols and keywords that were defined
in the grammar as literal strings.
That is, the error messages do not include terminals that were defined
through character classes.

Our approach of naming terminals in the semantic actions avoids the
kind of limitation found in Pappy, though it increases the annotation
burden because the implementor of the PEG is also responsible for
adding one semantic action for each terminal and its respective name.

The annotation burden can be lessened in implementations of PEGs that
treat parsing expressions as first-class objects, because this makes
it possible to define functions that annotate the lexical parts of the grammar to track errors, record information about the expected
 terminals to produce good error messages, and enforce lexical
  conventions such as the presence of surrounding whitespace.
For instance, in LPEG \cite{lpeg,ierusalimschy2009lpeg}, a PEG
implementation for Lua that defines patterns as first-class objects,
we can annotate the rule \textit{CmdSeq} as follows,
where the patterns \verb'V"A"', \verb'p1 * p2', and \verb'p^0'
are respectively equivalent to parsing expressions
$A$, $p_1 p_2$, and $p*$ (in the notation used by LPEG):

\begin{verbatim}
    CmdSeq = V"Cmd" * symb(";") * (V"Cmd" * symb(";"))^0;
\end{verbatim}

The function \texttt{symb} receives a string as its only argument and returns a parser that is equivalent to the parsing expression that we used in the Leg example.
That is, \verb'symb(";")' is equivalent to
\verb'";" Skip | &{ updateffp(";") }'.

We implemented error tracking and reporting using semantic actions
as a set of parsing combinators on top of LPeg and used these
combinators to implement a PEG parser for Tiny.
It produces the following error message for the example we have
been using in this section:

\begin{verbatim}
    factorial.tiny:6:1: syntax error, unexpected 'until',
                        expecting ';', '=', '<', '-', '+', '/', '*'
\end{verbatim}

We tested this PEG parser for Tiny with other erroneous inputs and in all
cases the parser identified an error in the same place as a top-down
parser without backtracking.
In addition, the parser for Tiny produced error messages that are
similar to the error messages produced by packrat parsers generated
by Pappy.
We annotated other grammars and successfully obtained
similar results.
However, the error messages are still generic; they are not
as specific as the error messages of a hand-written top-down
parser.

\section{Error Reporting in Top-Down Parsers with Backtracking} 
\label{sec:rel}

In this section, we discuss alternative approaches for error reporting
in top-down parsers with backtracking.

Mizushima et al. \cite{mizushima2010php} proposed a cut operator
($\uparrow$) to reduce the space consumption of packrat parsers;
the authors claimed that the cut operator can also be used to
implement error reporting in packrat parsers, but the authors did
not give any details on how the cut operator could be used for this
purpose.
The cut operator is borrowed from Prolog to annotate pieces of
a PEG where backtracking should be avoided.
PEGs' ordered choice works in a similar way to Prolog's green cuts,
that is, they limit backtracking to discard unnecessary solutions.
The cut proposed for PEGs is a way to implement Prolog's white cuts,
that is, they prevent backtracking to rules that will certainly fail.

The semantics of cut is similar to the semantics of an
\texttt{if-then-else} control structure and
can be simulated through predicates.
For instance, the PEG (with cut) $A \leftarrow B \uparrow C / D$ is
functionally equivalent to the PEG (without cut)
$A \leftarrow B C / !B D$ that is also functionally equivalent to
the rule $A \leftarrow B[C,D]$ on Generalized Top-Down Parsing
Language (GTDPL), one of the parsing techniques that influenced the
creation of PEGs \cite{ford2002packrat,ford2002pappy,ford2004peg}.
On the three cases, the expression $D$ is tried only if the expression
$B$ fails.
Nevertheless, this translated PEG still backtracks 
whenever $B$ successfully matches and $C$ fails.
Thus, it is not trivial to use this translation to implement error
reporting in PEGs. 

\emph{Rats!}~\cite{grimm2006rats} is a popular packrat parser
that implements error reporting with a strategy similar to
Ford's, with the change that it always reports error positions
at the start of productions, and pretty-prints non-terminal names
in the error message. For example, an error in a {\em ReturnStatement}
non-terminal becomes {\tt return statement expected}.

Even though error handling is an important task for parsers,
we did not find any other research results about error handling in PEGs,
beyond the heuristic proposed by Ford and the cut operator
proposed by Mizushima et al.
However, parser combinators \cite{hutton1992hfp} present some
similarities with PEGs so we will briefly discuss them for the
rest of this section.

In functional programming it is common to implement recursive
descent parsers using parser combinators \cite{hutton1992hfp}.
A parser is a function that we use to model symbols of the
grammar.
A parser combinator is a higher-order function that we use to
implement grammar constructions such as sequencing and choice.
One kind of parser combinator implements parsers that
return a list of all possible results of a parse, effectively
implementing a recursive descent parser with full backtracking.
Despite being actually deterministic in behavior (parsing the
same input always yields the same list of results), these
combinators are called {\em non-deterministic parser 
combinators} due to their use of a non-deterministic choice operator.
We get parser combinators that have the same semantics as PEGs
by changing the return type from list of results to \texttt{Maybe}.
That is, we use {\em deterministic parser combinators} that return
\texttt{Maybe} to implement recursive descent parsers with limited
backtracking~\cite{spivey2012maybe}. 
In the rest of this paper, whenever we refer to parser combinators
we intend to refer to these parser combinators with limited backtracking.

Like PEGs, most deterministic parser combinator libraries 
also use ordered choice, and thus suffer from the same
problems as PEGs with erroneous inputs, where the point
that the parser reached in the input is ususally far away
from the point of the error.

Hutton \cite{hutton1992hfp} introduced the \texttt{nofail} combinator
to implement error reporting in a quite simple way:
we just need to distinguish between failure and error during parsing.
More specifically, we can use the \texttt{nofail} combinator to
annotate the grammar's terminals and non-terminals that should
not fail; when they fail, the failure should be transformed into an
error. The difference between an error and a failure is that
an ordered choice just propagates an error in its first alternative
instead of backtracking and trying its second alternative, so any
error aborts the whole parser.
This technique is also called the \emph{three-values} technique~\cite{partridge1996fv}
because the parser finishes with one of the following values:
\texttt{OK}, \texttt{Fail} or \texttt{Error}.

Röjemo \cite{rojemo1995epc} presented a \texttt{cut} combinator
that we can also use to annotate the grammar pieces where
parsing should be aborted on failure, on behalf of efficiency and
error reporting.
The \texttt{cut} combinator is different from the cut operator\footnote{Throughout this paper, we refer to {\em combinators}
of parser combinators and to {\em operators} of PEGs, but these terms are effectively interchangeable.} ($\uparrow$) for PEGs because the combinator is abortive and unary
while the operator is not abortive and nullary.
The \texttt{cut} combinator introduced by Röjemo has the same
semantics as the \texttt{nofail} combinator introduced by Hutton.

Partridge and Wright \cite{partridge1996fv} showed that error detection
can be automated in parser combinators when we assume that the grammar
is LL(1).
Their main idea is:
if one alternative successfully consumes at least one symbol,
no other alternative can successfully consume any symbols.
Their technique is also known as the \emph{four-values} technique
because the parser finishes with one of the following values:
\texttt{Epsn}, when the parser finishes with success without
consuming any input;
\texttt{OK}, when the parser finishes with success consuming
some input;
\texttt{Fail}, when the parser fails without consuming any input; and
\texttt{Error}, when the parser fails consuming some input.
Three values were inspired by Hutton's work \cite{hutton1992hfp},
but with new meanings.

In the four-values technique, we do not need to annotate the grammar
because the authors changed the semantics of the sequence and choice
combinators to automatically generate the \texttt{Error} value
according to the Table \ref{tab:fv}.
In summary, the sequence combinator propagates an error when the
second parse fails after consuming some input while
the choice combinator does not try further alternatives
if the current one consumed at least one symbol from the input.
In case of error, the four-values technique detects the first symbol
following the longest parse of the input and uses this symbol to
report an error.

\begin{table}[t]
\begin{center}
\begin{tabular}{p{1cm}p{2cm}p{2cm}p{2cm}p{2cm}}
\hline
& $p_1$ & $p_2$ & $p_1 p_2$ & $p_1 \; | \; p_2$\\
\hline
& \texttt{Error} & \texttt{Error} & \texttt{Error} & \texttt{Error} \\
& \texttt{Error} & \texttt{Fail} & \texttt{Error} & \texttt{Error} \\
& \texttt{Error} & \texttt{Epsn} & \texttt{Error} & \texttt{Error} \\
& \texttt{Error} & \texttt{OK} ($x$) & \texttt{Error} & \texttt{Error} \\
& \texttt{Fail} & \texttt{Error} & \texttt{Fail} & \texttt{Error} \\
& \texttt{Fail} & \texttt{Fail} & \texttt{Fail} & \texttt{Fail} \\
& \texttt{Fail} & \texttt{Epsn} & \texttt{Fail} & \texttt{Epsn} \\
& \texttt{Fail} & \texttt{OK} ($x$) & \texttt{Fail} & \texttt{OK} ($x$) \\
& \texttt{Epsn} & \texttt{Error} & \texttt{Error} & \texttt{Error} \\
& \texttt{Epsn} & \texttt{Fail} & \texttt{Fail} & \texttt{Epsn} \\
& \texttt{Epsn} & \texttt{Epsn} & \texttt{Epsn} & \texttt{Epsn} \\
& \texttt{Epsn} & \texttt{OK} ($x$) & \texttt{OK} ($x$) & \texttt{OK} ($x$) \\
& \texttt{OK} ($x$) & \texttt{Error} & \texttt{Error} & \texttt{OK} ($x$) \\
& \texttt{OK} ($x$) & \texttt{Fail} & \texttt{Error} & \texttt{OK} ($x$) \\
& \texttt{OK} ($x$) & \texttt{Epsn} & \texttt{OK} ($x$) & \texttt{OK} ($x$) \\
& \texttt{OK} ($x$) & \texttt{OK} ($y$) & \texttt{OK} ($y$) & \texttt{OK} ($x$) \\
\end{tabular}
\end{center}
\caption{Behavior of sequence and choice in the four-values technique}
\label{tab:fv}
\end{table}

The four-values technique assumes that the input is composed by tokens
which are provided by a separate lexer.
However, being restricted to LL(1) grammars can be a limitation
because parser combinators, like PEGs, usually operate on strings of
characters to implement both lexer and parser together.
For instance, a parser for Tiny that is implemented with Parsec
\cite{leijen2001parsec} does not parse the following program:
\verb'read x;'.
That is, the matching of \verb'read' against \verb'repeat' generates
an error.
Such behavior is confirmed in Table \ref{tab:fv} by the third line
from the bottom.

Parsec is a parser combinator library for Haskell that employs a
technique equivalent to the four-values technique for implementing
LL(1) predictive parsers that automatically report errors
\cite{leijen2001parsec}. To overcome the LL(1) limitation,
Parsec introduced the {\tt try} combinator, a dual of Hutton's
{\tt nofail} combinator. The effect of {\tt try}
is to translate an error into a backtrackeable
failure. The idea is
to use {\tt try} to annotate the parts of the grammar where
arbitrary lookahead is needed. 

Parsec's restriction to LL(1) grammars made it possible to implement
an error reporting technique similar to the one
used in top-down parsers.
Parsec produces error messages that include the error position,
the character at this position and the \texttt{FIRST} and
\texttt{FOLLOW} sets of the
productions that were expected at this position.
Parsec also implements the error injection combinator (\texttt{<?>})
for naming productions.
This combinator gets two arguments: a parser \texttt{p} and a
string \texttt{exp}.
The string \texttt{exp} replaces the \texttt{FIRST} set of a
parser \texttt{p} when all the alternatives of \texttt{p} failed.
This combinator is useful to name terminals and non-terminals to
get better information about the context of a syntax error.

Swierstra and Duponcheel~\cite{swierstra1996dec} showed an
implementation of parser combinators for error recovery,
although most libraries and parser generators that are based
on parser combinators implement only error reporting.
Their work relies on the fact that the grammar is LL(1)
and shows an implementation of parser combinators that
repairs erroneous inputs, produces an appropriated message,
and continues parsing the rest of the input.
This approach was later extended to also deal with grammars
that are not LL(1), including ambiguous grammars~\cite{swierstra2009uuparsing}.
The extended approach relies heavily on some features that the
implementation language should have, such as lazy evaluation.

\section{Labeled Failures} \label{sec:lf}

Exceptions are a common mechanism for signaling and handling
errors in programming languages.
Exceptions let programmers classify the different errors their
programs may signal by using distinct types for distinct errors,
and decouple error handling from regular program logic.

In this section we add \emph{labeled failures} to PEGs,
a mechanism akin to exceptions and exception handling,
with the goal of improving error reporting while preserving
the composability of PEGs. In the next section we discuss how to use PEGs with labeled
failures to implement some of
the techniques that we have discussed in Section~\ref{sec:rel}: 
the \texttt{nofail} combinator~\cite{hutton1992hfp},
the \texttt{cut} combinator~\cite{rojemo1995epc},
the four-values technique~\cite{partridge1996fv} and
the \texttt{try} combinator~\cite{leijen2001parsec}.

A labeled PEG $G$ is a tuple $(V,T,P,L,p_{S})$ where
$L$ is a finite set of labels that must include the {\tt fail}
label, and the
expressions in $P$ have been extended with the {\em throw}
operator, explained below. The other parts use the same definitions from Section \ref{sec:pegs}. 

The abstract syntax of labeled parsing expressions adds the
\emph{throw} operator $\throw^{l}$, which generates a failure
with label $l$, and adds an extra argument $S$ to the ordered choice
operator, which is the set of labels that the ordered choice should
catch. $S$ must be a subset of $L$.
\[\it{
p = \varepsilon \; | \; a \; | \; A \; | \; p_1 p_2 \; | \;
  p_1 /^{S} p_2 \; | \; p* \; | \; !p \; | \; \throw^{l}
}\]

\begin{figure}
{\small
\begin{align*}
& \textbf{Empty} \;\;\;
{\frac{}{G[\varepsilon] \; x \Lp x}} \mylabel{empty.1}
\\ \\
& \textbf{Terminal} \;\;\;
{\frac{}{G[a] \; ax \Lp x}} \mylabel{char.1}
\;\;\;
{\frac{}{G[b] \; ax \Lp {\tt fail}}}, b \ne a \mylabel{char.2}
\;\;\;
{\frac{}{G[a] \; \varepsilon \Lp {\tt fail}}} \mylabel{char.3}
\\ \\
& \textbf{Non-terminal} \;\;\;
{\frac{G[P(A)] \; x \Lp X}
      {G[A] \; x \Lp X}} \mylabel{var.1}
\\ \\
& \textbf{Concatenation} \;\;\;
{\frac{G[p_1] \; xy \Lp y \fivespaces G[p_2] \; y \Lp X}
      {G[p_1 \; p_2] \; xy \Lp X}} \mylabel{con.1}
\;\;\;
{\frac{G[p_1] \; x \Lp l}
      {G[p_1 \; p_2] \; x \Lp l}} \mylabel{con.2}
\\ \\
& \textbf{Ordered Choice} \;\;\;
{\frac{G[p_1] \; xy \Lp y}
      {G[p_1 \;\slash^S\; p_2] \; xy \Lp y}} \mylabel{ord.1}
\;\;\;
{\frac{G[p_1] \; x \Lp l}
      {G[p_1 \;\slash^S\; p_2] \; x \Lp l}}, l \not\in S \mylabel{ord.2}
\\ \\
&
\tenspaces \tenspaces \tenspaces \,
{\frac{G[p_1] \; x \Lp l \fivespaces G[p_2] \; x \Lp X}
      {G[p_1 \;\slash^S\; p_2] \; x \Lp X}}, l \in S \mylabel{ord.3}
\\ \\
& \textbf{Repetition} \;\;\;
{\frac{G[p] \; x \Lp {\tt fail}}
      {G[p*] \; x \Lp x}} \mylabel{rep.1}
\;\;\;
{\frac{G[p] \; xy \Lp y \fivespaces G[p*] \; y \Lp X}
      {G[p*] \; xy \Lp X}} \mylabel{rep.2}
\\ \\
&
\tenspaces \tenspaces \;\,
{\frac{G[p] \; x \Lp l}
      {G[p*] \; x \Lp l}}, l \neq {\tt fail} \mylabel{rep.3}
\\ \\
& \textbf{Negative Predicate} \;\;\;
{\frac{G[p] \; x \Lp {\tt fail}}
      {G[!p] \; x \Lp x}} \mylabel{not.1}
\;\;\;
{\frac{G[p] \; xy \Lp y}
      {G[!p] \; xy \Lp {\tt fail}}} \mylabel{not.2}
\\ \\
&
\tenspaces \tenspaces \tenspaces \fivespaces \;
{\frac{G[p] \; x \Lp l}
      {G[!p] \; x \Lp l}}, l \neq {\tt fail} \mylabel{not.3}
\\ \\
& \textbf{Throw} \;\;\;
{\frac{}{G[\throw^{l}] \; x \Lp l}} \mylabel{throw.1}
\end{align*}
}
\caption{Semantics of PEGs with labels}
\label{fig:sem}
\end{figure}

Figure \ref{fig:sem} presents the semantics of PEGs with labels
as a set of inference rules.
The semantics of PEGs with labels is defined by the relation
$\Lp$ among a parsing expression, an input string and a result.
The result is either a string or a label.
The notation $G[p] \; xy \Lp \; y$ means that the expression
$p$ matches the input $xy$, consumes the prefix $x$ and leaves the
suffix $y$ as the output.
The notation $G[p] \; xy \Lp \; l$ indicates that the
matching of $p$ fails with label $l$ on the input $xy$.

Now $a$ matches and consumes itself and fails with label \texttt{fail}
otherwise;
$p_1 p_2$ tries to match $p_1$, if $p_1$ matches an input prefix,
then it tries to match $p_2$ with the suffix left by $p_1$,
the label $l$ is propagated otherwise;
$p_1 /^{S} p_2$ tries to match $p_1$ in the input and tries to
match $p_2$ in the same input only if $p_1$ fails with a label
$l \in S$, the label $l$ is propagated otherwise;
$p*$ repeatedly matches $p$ until the matching of $p$
silently fails with label {\tt fail}, and propagates a label $l$
when $p$ fails with this label;
$!p$ successfully matches if the input does not match $p$ with
the label {\tt fail}, fails producing the label \texttt{fail} when
the input matches $p$, and propagates a label $l$ when $p$
fails with this label, not consuming the input in all cases;
$\throw^{l}$ produces the label $l$.

We faced some design decisions in our formulation that are worth
discussing. First, we use a set of labels in the ordered choice as
a convenience. We could have each ordered choice handling a single label,
and it would just lead to duplication:
an expression
$p_1 \; /^{\{l_1,l_2,...,l_n\}} \; p_2$
would become
$( \; ... \; ((p_1 \; /^{l_1} \; p_2) \; /^{l_2} \; p_2) \; ... \; /^{l_n} \; p_2)$.

Second, we require the presence of a \texttt{fail} label
to maintain compatibility with the original semantics of PEGs,
where we only have \texttt{fail} to signal both error and failure.
For the same reason, we define the expression $p_1 / p_2$ as
syntactic sugar for $p_1 /^{\{{\tt fail}\}} p_2$.

Another choice was how to handle labels in a repetition.
We chose to have a repetition stop silently only on the
\texttt{fail}
label to maintain the following identity, which holds
for unlabeled PEGs: an expression $p*$
is equivalent to a fresh non-terminal $A$ with the rule
$A \leftarrow p \; A \; / \; \varepsilon$.

Finally, the negative predicate succeeds only on the
 \texttt{fail} label to
allow the implementation of the positive predicate: the
 expression
$\&p$ that implements the positive predicate in the original semantics
of PEGs \cite{ford2002packrat,ford2002pappy,ford2004peg}
is equivalent to the expression $!!p$.
Both expressions successfully match if the input matches $p$,
fail producing the label \texttt{fail} when the input does
not match $p$, and propagate a label $l$ when $p$ fails
with this label, not consuming the input in all cases.

Figure \ref{fig:tinylabels} presents a PEG with labels for the Tiny
language from Section \ref{sec:pegs}. 
The expression $[p]^{l}$ is syntactic
sugar for $(p \; / \; \throw^{l})$. 

The strategy we used to annotate the grammar was the following:
first, annotate every terminal that should not fail, that is,
making the PEG backtrack on failure of that terminal would
be useless, as the whole parse would either fail or not consume
the whole input in that case. For an LL(1) grammar like the
one in the example, that means all terminals in a production except
the one in the very beginning of the production.

After annotating the terminals, we do the same for whole productions.
We annotate productions where failing the whole production 
always implies an error in the input, adding a new alternative that
throws an error label specific to that production.

For Tiny,
we end up annotating just two productions, {\em Factor} and {\em CmdSeq}.
Productions {\em Exp}, {\em SimpleExp}, and {\em Term} also
should not fail, but after annotating {\em Factor} they always
either succeed or throw the label {\tt exp}. The {\it Cmd}
production can fail, because it controls whether the repetition inside
{\it CmdSeq} stops or continue.

Notice that this is just an example of how a grammar can be
annotated. More thorough analyses are possible: for example,
we can deduce that {\it Cmd} is not allowed to fail unless
the next token is one of {\tt ELSE}, {\tt END}, {\tt UNTIL},
or the end of the input (the {\tt FOLLOW} set of {\it Cmd}), and instead of $\throw^{cmd}$ add $!({\tt ELSE} \; / \; {\tt END} \; / \;
{\tt UNTIL} \; / \; !.) \throw^{cmd}$ as a new alternative. This would
remove the need for the $\throw^{cmd}$ annotation of {\it CmdSeq}.

\begin{figure}
\begin{align*}
\it{Tiny} & \leftarrow \it{CmdSeq}\\
\it{CmdSeq} & \leftarrow \it{(Cmd \; [{\tt SEMICOLON}]^{\tt sc}) \;
  (Cmd \; [{\tt SEMICOLON}]^{\tt sc})*} \; / \; \throw^{cmd}\\
\it{Cmd} & \leftarrow \it{IfCmd \; / \; RepeatCmd \; / \;
  AssignCmd \; / \; ReadCmd \; / \; WriteCmd}\\
\it{IfCmd} & \leftarrow \it{{\tt IF} \, Exp \,
  [{\tt THEN}]^{\tt then} \, CmdSeq \,
  ({\tt ELSE} \; CmdSeq \;/\; \varepsilon) \;
  [{\tt END}]^{\tt end}}\\
\it{RepeatCmd} & \leftarrow \it{{\tt REPEAT} \; CmdSeq \;
  [{\tt UNTIL}]^{\tt until} \; Exp}\\
\it{AssignCmd} & \leftarrow \it{{\tt NAME} \; [{\tt ASSIGNMENT}]^{\tt bind} \;
  Exp}\\
\it{ReadCmd} & \leftarrow \it{{\tt READ} \; [{\tt NAME}]^{\tt read}}\\
\it{WriteCmd} & \leftarrow \it{{\tt WRITE} \; Exp}\\
\it{Exp} & \leftarrow \it{SimpleExp \;
  (({\tt LESS} \; / \; {\tt EQUAL}) \; SimpleExp \; /
  \; \varepsilon})\\
\it{SimpleExp} & \leftarrow \it{Term \;
  (({\tt ADD} \; / \; {\tt SUB}) \; Term)*}\\
\it{Term} & \leftarrow \it{Factor \;
  (({\tt MUL} \; / \; {\tt DIV}) \; Factor)*}\\
\it{Factor} & \leftarrow \it{{\tt OPENPAR} \; Exp
  \; [{\tt CLOSEPAR}]^{\tt cp} \; /
  \; {\tt NUMBER} \; / \; {\tt NAME}} \; / \; \throw^{\tt exp}
\end{align*}
\caption{A PEG with labels for the Tiny language}
\label{fig:tinylabels}
\end{figure}

The PEG reports an error when parsing finishes with an
uncaught label. Each label is associated with a
meaningful error message.
For instance, if we use this PEG for Tiny to parse the
code example from Section \ref{sec:pegs}, parsing finishes with the
\texttt{sc} label and the PEG can use it to produce the following
error message:

\begin{verbatim}
    factorial.tiny:6:1: syntax error, there is a missing ';'
\end{verbatim}

Note how the semantics of the repetition works with the rule
\textit{CmdSeq}.
Inside the repetition, the \texttt{fail} label means that there are no
more commands to be matched and the repetition should stop while the
\texttt{sc} label means that a semicolon (\texttt{;}) failed to match.
It would not be possible to write the rule \textit{CmdSeq}
using repetition if we had chosen to stop the repetition with any
label, instead of stopping only with the \texttt{fail} label,
because the repetition would accept the \texttt{sc} label as the
end of the repetition whereas it should propagate this label.

Although the semantics of PEGs with labels presented in
Figure~\ref{fig:sem} allows us to generate specific error
messages, it does not give us information about the location
where the failure probably is, so it is necessary to use some
extra mechanism (e.g., semantic actions) to get this information.
To avoid this, we can adapt the semantics of PEGs with labels
to give us a tuple $\Tup{l}{y}$ in case of a failure, where
$y$ the suffix of the input that PEG was trying to match
when label $l$ was thrown. Updating the semantics of Figure~\ref{fig:sem}
to reflect this change is straightforward.

In the next section, we try to establish a comparison between
the farthest failure position heuristic and the labeled failure mechanism
by contrasting two different implementations of a parser for a dialect of
the Lua language.

\section{Labeled Failures versus Farthest Failure Position}
\label{sec:labelsfft}

In this section we will compare two parser implementations for
the Typed Lua language, one that uses the farthest failure
position heuristic for error reporting, which was implemented
first, and one based on labeled failures.

Typed Lua~\cite{maidl2014typedlua} is an optionally-typed
extension of the Lua programming language~\cite{lua}.
The Typed Lua parser recognizes plain Lua programs, and
also Lua programs with type annotations. The first version
of the parser was implemented using Ford's
heuristic and the LPeg library~\footnote{The first version of Typed Lua parser is available at
\url{https://github.com/andremm/typedlua/blob/master/typedlua/tlparser.lua}}.

As LPeg does not have a native error reporting mechanism based
on Ford's strategy, the failure tracking heuristic was implemented
following the approach described in Section~\ref{sec:pegs},
which uses semantic actions.

Below we have the example of a Lua statement with a syntax error:
\begin{verbatim}
     a = function (a,b,) end
\end{verbatim}

In this case, the parser gives us the following error message,
which is quite precise:
\begin{verbatim}
    test.lua:1:19: syntax error, unexpected ')', expecting '...', 'Name'
\end{verbatim}

In the previous case, the list of expected tokens had only two
candidates, but this is not always the case. For example, let us
consider the following Lua program, where there is no expression
after the \textbf{elseif} in line 5:
\begin{verbatim}
 01  if a then
 02    return x
 03  elseif b then
 04    return y
 05  elseif
 06
 07  end
\end{verbatim}

The corresponding error message has a lengthy list of tokens,
which does not help much to fix the error:
\begin{verbatim}
    test.lua:7:1: syntax error, unexpected 'end', expecting '(', 'Name', '{', 
                  'function', '...', 'true', [9 more tokens] 
\end{verbatim}

When using the Typed Lua parser based on Ford's heuristic it is not
uncommon to get a message like this. An analysis of the test cases
available in the parser package shows us that around half of the
expected error messages have a list of at least eight expected tokens
(there were messages with a list of 39 expected tokens).

The second implementation of the Typed Lua parser was 
based on labeled failures and used the LPegLabel library~\cite{lpeglabel},
which is an extension of the LPeg library that supports
labeled failures~\footnote{The second version of Typed Lua parser is available at
\url{https://github.com/sqmedeiros/lpeglabel/tree/master/examples/typedlua}}.

The use of labeled failures implies in an annotation burden to
specify when each label should be thrown. In the case of Typed Lua grammar,
we defined almost 50 different labels, using the same basic strategy
that we used to annotate the Tiny grammar of Section~\ref{sec:lf}.

Given the previous Lua program, the error message presented now is:
\begin{verbatim}
    test.lua:7:1: expecting <exp> after 'elseif'
\end{verbatim}

This error message is more informative than the previous one,
which was generated automatically. We analyzed the error
messages generated by the two parsers in 53 examples,
and considered that in more than half of these examples
the parser based on labeled failures produced a better
error message. In about 20\% of the cases we considered
the error messages of both approaches similar, and in other
20\% of the cases the parser based on Ford's heuristic
generated better error messages.

The error location indicated by the two parsers in the examples
analyzed was essentially the same. This seems to indicate that
the main difference in practice between both approaches is related
to the length of the error message generated. By using labeled
failures we can probably get a simple error message at the cost
of annotating the grammar, while by using the farthest failure
tracking heuristic we can automatically generate error messages,
which sometimes may contain a long list of expected tokens.

A point that is worth mentioning about the labeled failure approach
is that it is not mandatory to annotate the entire grammar. The grammar
can be annotated incrementally, at the points where the current error
message is not good enough, and when no specific label is thrown, i.e.,
when the label \texttt{fail} is thrown, an error message can be
generated automatically by using the position where the failure
occurred. This means that combining labeled failures with the farthest
failure position reduces the annotation burden, and helps to
identify the places in the parser where the a label is desirable.

In the next
section, we discuss some applications of labeled failures:
we can use labeled PEGs to express the error
reporting techniques that we have discussed in
Section~\ref{sec:rel}~\cite{hutton1992hfp,rojemo1995epc,partridge1996fv,leijen2001parsec}, and also to efficiently parse
context-free grammars that can use the
$LL(*)$ parsing strategy~\cite{parr2011llstar}.

\section{Applications of Labeled Failures}
\label{sec:labelsrelated}

This section shows that PEGs with labeled failures can express
several error reporting techniques used in the realm of parsing
combinators. They can also efficiently parse
context-free grammars that
are parseable by the $LL(*)$ top-down parsing strategy.

In Hutton's deterministic parser combinators~\cite{hutton1992hfp}, the \texttt{nofail}
combinator is used to distinguish between failure and error.
We can express the \texttt{nofail} combinators using PEGs with labels
as follows:
\begin{align*}
{\tt nofail} \; p & \; \equiv \; p \; / \; \throw^{\tt error}
\end{align*}

That is, \texttt{nofail} is an expression that transforms the failure
of $p$ into an error to abort backtracking.
Note that the \texttt{error} label should not be caught by any
ordered choice.
Instead, the ordered choice propagates this label and catches solely
the \texttt{fail} label.
The idea is that parsing should finish with one of the following
values: \texttt{success}, \texttt{fail} or \texttt{error}.

The annotation of the Tiny grammar to use \texttt{nofail} is
similar to the annotation we have done using labeled failures.
Basically, we just need to change the grammar to use
\texttt{nofail} instead of $[p]^{l}$.
For instance, we can write the rule \textit{CmdSeq} as follows:
\[
\it{CmdSeq} \leftarrow \it{(Cmd \; ({\tt nofail \; SEMICOLON})) \;
  (Cmd \; ({\tt nofail \; SEMICOLON}))*}\\
\]

If we are writing a grammar from scratch, there is no advantage to
use \texttt{nofail} instead of more specific labels,
as the annotation burden is the same and with \texttt{nofail} we
lose more specific error messages.

The \texttt{cut} combinator~\cite{rojemo1995epc} was introduced
to reduce the space inefficiency of backtracking parsers, where
the possibility of backtracking implies that any input that has already
been processed must be kept in memory until the end of parsing.
Semantically it is identical to {\tt nofail}, differing only
in the way the combinators are implemented: to implement {\tt cut}
the parser combinators use continuation-passing style, so {\tt cut}
can drop the failure continuation and consequently any pending
backtrack frames. Hutton's {\tt nofail} is implemented in direct style,
and is not able to drop pending backtrack frames.
Expressing a {\tt cut} operator with the same properties
is not possible in our semantics of PEGs.

The four-values technique changed the semantics
of parser combinators to implement predictive parsers for LL(1)
grammars that automatically identify the longest input prefix
in case of error, without needing annotations in the grammar.
We can express this technique using labeled failures by transforming
the original PEG with the following rules:
\begin{align}
\label{fv:e}
\llbracket \varepsilon \rrbracket & \equiv \; \throw^{\tt epsn}\\
\label{fv:a}
\llbracket a \rrbracket & \equiv \; a\\
\label{fv:A}
\llbracket A \rrbracket & \equiv \; A\\
\label{fv:bind}
\llbracket p_1 p_2 \rrbracket & \equiv \;
\llbracket p_1 \rrbracket \;
(\llbracket p_2 \rrbracket \; / \throw^{\tt error} \; /^{\{{\tt epsn}\}} \;
\varepsilon) \; /^{\{{\tt epsn}\}} \; \llbracket p_2 \rrbracket\\
\label{fv:choice}
\llbracket p_1 / p_2 \rrbracket & \equiv \;
\llbracket p_1 \rrbracket \; /^{\{{\tt epsn}\}} \;
(\llbracket p_2 \rrbracket \; / \throw^{\tt epsn}) \; / \;
\llbracket p_2 \rrbracket
\end{align}

This translation is based on three labels:
\texttt{epsn} means that the expression successfully finished
without consuming any input,
\texttt{fail} means that the expression failed without consuming
any input,
and \texttt{error} means that the expression failed after
consuming some input.
In our translation we do not have an \texttt{ok} label because a
resulting suffix means that the expression successfully finished
after consuming some input.
It is straightforward to check that the translated expressions
behave according to the Table \ref{tab:fv} from Section \ref{sec:rel}.

Parsec introduced the \texttt{try} combinator to annotate parts of the
grammar where arbitrary lookahead is needed.
We need arbitrary lookahead because PEGs and parser combinators usually
operate at the character level.
The authors of Parsec also showed a correspondence between the semantics
of Parsec as implemented in their library and Partridge and Wright's
four-valued combinators,
so we can emulate the behavior of Parsec using labeled failures by
building on the five rules above and adding the following rule for
\texttt{try}:
\begin{align}
\label{fv:try}
\llbracket {\tt try} \; p \rrbracket & \equiv
\llbracket p \rrbracket \; /^{\{{\tt error}\}} \; \throw^{\tt fail}
\end{align}

If we take the Tiny grammar of Figure \ref{fig:tiny} from
Section \ref{sec:pegs}, insert \texttt{try} in the necessary places,
and pass this new grammar through the transformation
$\llbracket\;\rrbracket$, then we get a PEG that automatically
identifies errors in the input with the \texttt{error} label.
For instance, we can write the rule \textit{RepeatCmd} as follows:
\begin{align*}
\it{RepeatCmd} & \leftarrow \it{({\tt try \; REPEAT}) \; CmdSeq \;
  {\tt UNTIL} \; Exp}
\end{align*}

$LL(*)$~\cite{parr2011llstar} is a parsing strategy used by
the popular parsing tool ANTLR~\cite{parr2013antlr,antlrsite}~\footnote{The
recently released version $4$ of ANTLR uses {\em adaptive}
 $LL(*)$
as its parsing strategy.}. An $LL(*)$ parser is a top-down
parser with arbitrary lookahead. The main idea of $LL(*)$
parsing is to build a deterministic finite automata for
each rule in the grammar, and use this automata to predict
which alternative of the rule the parser should follow,
based on the rest of the input. Each final state of
the DFA should correspond to a single alternative, or
we have an $LL(*)$ parsing conflict. 

Mascarenhas et al.~\cite{mascarenhas2014} shows how
CFG classes that correspond to top-down predictive parsing
strategies can be encoded with PEGs by using predicates to
encode the lookahead necessary for each alternative.
As translating a Deterministic Finite Automata (DFA) to a
PEG is straightforward~\cite{ierusalimschy2009lpeg,mascarenhas2014},
this gives us one way of encoding an $LL(*)$ parsing strategy in
a PEG, at the cost of encoding a different copy of the
lookahead DFA for each alternative.

Labeled PEGs provide a more straightforward encoding,
where instead of a predicate for each alternative, 
we use a single encoding of the lookahead DFA, where each
final state ends with a label corresponding to one
of the alternatives. Each alternative is preceded by
a choice operator that catches its label.

\begin{figure}[t]
\begin{center}
\begin{tikzpicture}[shorten >=1pt,node distance=2.9cm,on grid,auto] 
   \node[state,initial] (s_0)   {$s_0$}; 
   \node[state] (s_1) [above right=of s_0] {$s_1$}; 
   \node[state,accepting] (s_4) [above right=of s_1] {$s_4 \rightarrow 2$}; 
   \node[state,accepting] (s_6) [ right=of s_1] {$s_6 \to 1$}; 
   \node[state] (s_2) [ right=of s_0] {$s_2$}; 
   \node[state,accepting](s_5) [ below=of s_6] {$s_5 \rightarrow 4$};
   \node[state,accepting](s_3) [below=of s_5] {$s_3 \rightarrow 3$};
    \path[->] 
    (s_0) edge  node {ID} (s_1)
          edge[below]  node {unsigned} (s_2)
          edge[below]  node {int} (s_3)
    (s_1) edge  node  {=} (s_4)
          edge  node {EOF} (s_6)
          edge node  {ID}  (s_5)
    (s_2) edge  node {ID} (s_5)
          edge  node {int} (s_3) 
          edge [loop below] node {unsigned} ();
\end{tikzpicture}
\end{center}
\caption{$LL(*)$ lookahead DFA for rule $S$}
\label{fig:antlrdfa}
\end{figure}

To make the translation clearer, let us consider the following
example, from Parr and Fisher~\cite{parr2011llstar},
where non-terminal $S$ uses non-terminal $Exp$ (omitted) to match
arithmetic expressions:
\[
\it{S} \;\rightarrow\; {\tt ID} \;\,|\,\; \tt{ID} \,`{\tt =}\textrm'\, {\it Exp} \;\,|\;\,
                   `{\tt unsigned}\textrm' \; `\tt{*}\textrm' \, `{\tt int}\textrm'\; {\tt ID} \;\,|\;\,
                   `{\tt unsigned}\textrm' \;`\tt{*}\textrm'\, {\tt ID}\; {\tt ID}
\]

After analyzing this grammar, ANTLR produces the DFA
of Figure~\ref{fig:antlrdfa}. When trying to match $S$,
ANTLR runs this DFA on the input until it reaches a
final state that indicates which alternative of the
choice of rule $S$ should be tried. For example,
ANTLR chooses the second alternative if the DFA
reaches state $s_4$.

Figure~\ref{fig:pegllstar} gives a labeled PEG that encodes
the $LL(*)$ parsing strategy for rule $S$. Rules $S_0$,
$S_1$, and $S_2$ encode the lookahead DFA of Figure~\ref{fig:antlrdfa}, and correspond to states
$s_0$, $s_1$, and $s_2$, respectively. The throw
expressions correspond to the final states.
As the throw expressions make the input backtrack to where
it was prior to parsing $S_0$, we do not need to use a
predicate. We can also turn any uncaught failures into
errors.

\begin{figure}[t]
\begin{align*}
\it{S} & \leftarrow \it{S_0} \;/^{1}\; \tt{ID} \;/^{2}\;
	             \tt{ID} \,`\tt{=}\textrm'\, \it{Exp} \;/^{3}\; 
               `\tt{unsigned}\textrm' \; `\tt{*}\textrm' \; `\tt{int}\textrm' \; \tt{ID} \;/^{4}\;
               `\tt{unsigned}\textrm' \; `\tt{*}\textrm' \; \tt{ID}  \; \tt{ID}
\;/\; \throw^{\tt error} \\
S_0 & \leftarrow \tt{ID} \, \it{S_1}  \;/\; `\tt{unsigned}\textrm' \, \it{S_2}  \;/\;  `\tt{int}\textrm' \throw^3 \\
S_1 & \leftarrow \,`\tt{=}\textrm' \throw^2  \;/\;  !. \throw^1  \;/\;  \tt{ID} \throw^4  \\
S_2 & \leftarrow `\tt{unsigned}\textrm' \, \it{S_2}  \;/\;  \tt{ID} \throw^4  \;/\;  `\tt{int}\textrm' \throw^3
\end{align*}
\caption{PEG with Labels that Simulates the $LL(*)$ Algorithm}
\label{fig:pegllstar}
\end{figure}

\section{Conclusions} 
\label{sec:conc}

In this paper, we discussed error reporting strategies for
Parsing Expression Grammars. PEGs behave badly on the presence
of syntax errors, because backtracking often makes the PEG
lose track of the position where the error happened.
This limitation was already known by Ford, and he
tried to fix it in his PEG implementation by having
the implementation track the farthest position in the
input where a failure has happened~\cite{ford2002packrat}.

We took Ford's failure tracking heuristic and showed that
it is not necessary to modify a PEG implementation to
track failure positions as long as the implementation
has mechanisms to execute semantic actions, and the
current parsing position is exposed to these actions.
In addition, we also showed how it is easy to extend
the semantics of PEGs to incorporate failure tracking,
including information that can indicate what the
PEG was expecting when the failure happened.

Tracking the farthest failure position, either by
changing the PEG implementation, using semantic actions,
or redefining the semantics of PEGs, helps PEG parsers
produce error messages that are close to error messages
that predictive top-down parsers are able to produce,
but these are generic error messages, sometimes with
a long list of expected tokens.

As a way of generating more specific error messages,
we introduced a mechanism of labeled failures to PEGs.
This mechanism closely resembles standard exception
handling in programming languages. Instead of a
single kind of failure, we introduced a \emph{throw} operator
$\throw^{l}$ that can throw different kinds of failures,
identified by their labels, and extended the ordered choice
operator to specify the set of labels that it catches.
The implementation of these extensions in parser generator
tools based on PEGs is straightforward.

We showed how labeled failures can be used as a way to
annotate error points in a grammar, and tie them to more
meaningful error messages. Labeled failures are orthogonal
to the failure tracking approach we discussed earlier, so
grammars can be annotated incrementally, at the points
where better error messages are judged necessary.

We also showed that the labeled failures approach can express
several techniques for error reporting used in parsers
based on deterministic parser combinators, as
presented in related
work~\cite{hutton1992hfp,rojemo1995epc,partridge1996fv,leijen2001parsec}.
Labeled failures can also be used as a way of encoding 
the decisions made by a predictive top-down parser, as long
as the decision procedure can be encoded as a PEG, and showed
an example of how to encode an $LL(*)$ grammar in this way.

Annotating a grammar with labeled failures demands care:
if we mistakenly annotate expressions that should be able to
fail, this modifies the behavior of the parser beyond error 
reporting. In any case, the use of labeled PEGs for error
reporting introduces an annotation burden that is
lesser than the annotation burden introduced by error
productions in LR parsers, which also demand care, as
their introduction usually lead to \textit{reduce-reduce}
conflicts~\cite{jeffery2003lre}.

We showed the error reporting strategies in the context
of a small grammar for a toy language, and we also discussed
the implementation of parsers for the Typed Lua language,
an extension of the Lua language, based on these strategies.
Moreover, we also implemented parsers for other languages,
such as Céu~\cite{ceu}, based on theses approaches,
improving the quality of error reporting either with
generic error messages or with more specific error messages.

\bibliographystyle{elsarticle-num}
\bibliography{peg-labels-scp}

\appendix
\appendix
\section{Formalization of Farthest Failure Tracking}
\label{sec:semfarther}

In Section~\ref{sec:pegs}, we saw that a common approach
to report errors in PEGs is to inform the position of the
farthest failure,
and how we can use semantic actions to implement
this approach in common PEG implementations. In this appendix,
we show a conservative extension of the PEG formalism
that expresses this error reporting approach directly,
making it a part of the semantics of PEGs instead of an
ad-hoc extension to PEG implementations.

The result of a PEG on a given input is either
$\Fail$ or a suffix of the input. Returning a suffix of the
input means that the PEG succeeded and consumed a prefix
of the input. PEGs with farthest failure tracking return
the product of the original result and either another
suffix of the input, to denote the position of the farthest
failure during parsing, or $\Nil$ to denote that 
there were no failures. 

Figure~\ref{fig:semfarthest} presents a formal semantics of
PEGs with farthest failure tracking as a set of inference rules.
The notation $G[p] \; xy \Lp \; \Tup{y}{v?}$ represents
a successful match of the parsing expression $p$ in the context of
a PEG $G$ against the subject $xy$, consuming $x$ and leaving the
suffix $y$.
In a similar way, $G[p] \; xy \Lp \; \Tup{\Fail}{v?}$
represents an unsuccessful match. The term $v?$ denotes
a value that is either a suffix $v$ of the {\em original} input or
$\Nil$. We cannot use an empty string to denote that there were no failures,
as an empty string already means that a failure has occurred after
the last position of the input.
The term $\Any$ denotes a value that is either a suffix of the
{\em current} input or $\Fail$.

The auxiliary function \Suf\ that appears on
Figure~\ref{fig:semfarthest} compares two possible
error positions, denoted by a suffix of the
input string, or \Nil\, if no failure has occurred,
and returns the furthest: any suffix of the input
is a further possible error position than \Nil\,
and a shorter suffix is a further possible
error position than a longer suffix.

Rule {\bf empty.1} deals with the empty expression. This
expression never fails, so the failure position is always
$\Nil$. Rule {\bf var.1} deals with
non-terminals, so it just propagates the result of
matching the right-hand side of the non-terminal.
Rules {\bf char.1}, {\bf char.2}, and {\bf char.3} deal with
terminals. The latter two rules denote failures, so they
return the subject as the failure position.

\begin{figure}[t]
	{\small
		\begin{align*}
		& \textbf{Empty} \;\;\;
		{\frac{}{G[\varepsilon] \; x \Lp \Tup{x}{\Nil}}} \mylabel{empty.1}
		\fivespaces \textbf{Non-terminal} \;\;\;
		{\frac{G[P(A)] \; x \Lp \Tup{\Any}{\Ex}}
			{G[A] \; x \Lp \Tup{\Any}{\Ex}}} \mylabel{var.1}
		\\ \\
		& \textbf{Terminal} \tenspaces 
		{\frac{}{G[a] \; ax \Lp \Tup{x}{\Nil}}} \mylabel{char.1} \\ \\
		& \twentyspaces
		{\frac{}{G[b] \; ax \Lp \Tup{\Fail}{ax}}}, b \ne a \mylabel{char.2}
		\fivespaces
		{\frac{}{G[a] \; \varepsilon \Lp \Tup{\Fail}{\varepsilon}}} \mylabel{char.3}
		\\ \\
		& \textbf{Concatenation} \;\;\;
		{\frac{G[p_1] \; xy \Lp \Tup{y}{\Ex} \fivespaces G[p_2] \; y \Lp \Tup{\Any}{\Ey}}
			{G[p_1 \; p_2] \; xy \Lp \Tup{\Any}{\Suff{\Ex}{\Ey}}}}   \mylabel{con.1}
		\;\;\;
		{\frac{G[p_1] \; x \Lp \Tup{\Fail}{\Ex}}
			{G[p_1 \; p_2] \; x \Lp \Tup{\Fail}{\Ex}}} \mylabel{con.2}
		\\ \\
		& \textbf{Ordered Choice} \;\;\;
		{\frac{G[p_1] \; xy \Lp \Tup{y}{\Ex}}
			{G[p_1 \;/\; p_2] \; xy \Lp \Tup{y}{\Ex}}} \mylabel{ord.1}
		\;\;\;
		{\frac{G[p_1] \; x \Lp \Tup{\Fail}{\Ex} \fivespaces G[p_2] \; x \Lp \Tup{\Any}{\Ey}}
			{G[p_1 \;/\; p_2] \; x \Lp \Tup{\Any}{\Suff{\Ex}{\Ey}}}}  \mylabel{ord.2}
		\\ \\
		& \textbf{Repetition} \;\;\;
		{\frac{G[p] \; x \Lp \Tup{\Fail}{\Ex}}
			{G[p*] \; x \Lp \Tup{x}{\Ex}}} \mylabel{rep.1}
		\;\;\;
		{\frac{G[p] \; xyz \Lp \Tup{yz}{\Ex} \fivespaces G[p*] \; yz \Lp \Tup{z}{\Ey}}
			{G[p*] \; xyz \Lp \Tup{z}{\Suff{\Ex}{\Ey}}}} \mylabel{rep.2}
		\\ \\
		& \textbf{Negative Predicate} \;\;\;
		{\frac{G[p] \; x \Lp \Tup{\Fail}{\Ex}}
			{G[!p] \; x \Lp \Tup{x}{\Nil}}} \mylabel{not.1}
		\;\;\;
		{\frac{G[p] \; xy \Lp \Tup{y}{\Ex}}
			{G[!p] \; xy \Lp \Tup{\Fail}{xy}}} \mylabel{not.2}
		\end{align*}
	}
	\caption{Semantics of PEGs with farthest failure tracking}
	\label{fig:semfarthest}
\end{figure}

Rules {\bf con.1} and {\bf con.2} deal with concatenation.
The second rule just propagates the failure position, but
rule {\bf con.1} needs to take the farthest position
between the two parts of the concatenation. The rules
for ordered choice ({\bf ord.1} and {\bf ord.2}) and
repetition ({\bf rep.1} and {\bf rep.2}) work in
a similar way: whenever there are two possible farthest
failure positions, we use \Suf\ to take the farthest of
them.

Finally, rules {\bf not.1} and {\bf not.2} deal with
the syntactic predicate. Fitting this predicate in
this error reporting framework is subtle. The rules
not only need to make sense for the predicate in
isolation, but also have to make sense for $!!p$, which
works as a ``positive lookahead" predicate that succeeds
if $p$ succeeds but fails if $p$ fails while never
consuming any input.

For {\bf not.1}, the case where $!p$ succeeds, we have two
choices: either propagate $p$'s farthest failure position, or
ignore it (using $\Nil$). The first choice can lead to an
odd case, where the failure that made $!p$ succeed can get
the blame for the overall failure of the PEG, so
{\bf not.1} takes the second choice.

We also have two choices for the
case where $!p$ fails: either propagate the failure position
from $p$ or just use the current position. The first choice
can also lead to an odd cases where the overall failure
of the PEG might be blamed on something in $p$ that, if
corrected, still makes $p$ succeed and $!p$ fail, so
{\bf not.2} also takes the second choice.

The end result is that what happens inside a predicate simply
does not take part in error reporting at all, which is the
simplest approach, and also gives a consistent result for
$!!p$.

As an example that shows the interaction of these
rules, let us consider again the Tiny program in
Figure~\ref{fig:tiny}, reproduced below:

\begin{verbatim}
01  n := 5;
02  f := 1;
03  repeat
04    f := f * n;
05    n := n - 1
06  until (n < 1);
07  write f;
\end{verbatim}

The missing ``{\tt ;}" at the end of line 5
makes the repetition ({\it Cmd {\tt SEMICOLON}})*
inside {\it CmdSeq} succeed through rule {\bf rep.1}
and propagate the failure position to the concatenation
inside this same non-terminal. Rule {\bf con.2} then
propagates the same failure position to the concatenation
inside {\it RepeatCmd}.

Another failure occurs in {\it RepeatCmd} when {\tt until}
does not match {\tt n}, but rule {\bf con.2} again propagates
the position of the missing ``{\tt ;}", which is farthest.
This leads to a failure inside the repetition on {\it CmdSeq}
again, which propagates the position of the missing ``{\tt ;}"
through rule {\bf rep.2}. Finally, rules {\bf con.2} and {\bf var.1}
propagate this failure position to the top of the grammar
so the missing semicolon gets the blame for the PEG
not consuming the whole program.

We can translate the failure suffix into a line and column
number inside the original subject, and extract the
first token from the beginning of this suffix to
produce an error message similar to the error messages
in Section~\ref{sec:pegs}:

\begin{verbatim}
factorial.tiny:6:1: syntax error, unexpected 'until'
\end{verbatim}

While this message correctly pinpoints the location
of the error, it could be more informative.
\ref{sec:fartherandlist} shows how we can
extend the semantics of Figure~\ref{fig:semfarthest}
to gather more information than just the farthest
failure position, thus making us able to generate
error messages as informative as the ones produced by Pappy,
which we discussed in the end of Section~\ref{sec:pegs}.
In the next section, we will introduce another approach
for error reporting in PEGs, which can produce more precise
error messages, at the cost of annotating the grammar.

\section{Formalization of Farthest Failure Tracking and Error Lists}
\label{sec:fartherandlist}

In this appendix we show how the formalization of the farthest failure tracking
from \ref{sec:semfarther} can easily be adapted
to build a more elaborate error object, instead of just returning a position.

The new semantics formalizes a strategy similar to
the one used by Ford~\citep{ford2002packrat} in his
PEG implementation. The basic idea is to keep a list of
the simple expressions that the PEG was expecting to match when
a failure occurred, and use this list to build
an error message. Figure~\ref{fig:semfarthestjoin}
gives inference rules for the $\Lp$ relation of
this new semantics.

The result of the matching of a PEG $G$ against an input $x$
is still a pair, but the second component is now
another pair, the {\em error pair}.
The first component of the error pair is the farthest
error position, same as in the previous semantics.
The second component is a list of parsing expressions
that were expected at this error position.
If the grammar does not have syntactic predicates,
the expressions in this list are just terminals and
non-terminals.

\begin{figure}[p]
{\small
\begin{align*}
& \textbf{Empty} \;\;\;
{\frac{}{G[\varepsilon] \; x \Lp \Tup{x}{(\Nil,\{\})}}} \mylabel{empty.1} \\ \\
& \textbf{Non-terminal} \;\;\;
{\frac{G[P(A)] \; x \Lp \Tup{\Any}{\Tup{\Ex}{\List}}}
      {G[A] \; x \Lp \Tup{\Any}{\Jv((v?,\List),x,A)}}} \mylabel{var.1}
\\ \\
& \textbf{Terminal} \tenspaces 
{\frac{}{G[a] \; ax \Lp \Tup{x}{(\Nil,\{\})}}} \mylabel{char.1} \\ \\
& \tenspaces
{\frac{}{G[b] \; ax \Lp \Tup{\Fail}{\Tup{ax}{\Mon{b}}}}}, b \ne a \mylabel{char.2}
\fivespaces
{\frac{}{G[a] \; \varepsilon \Lp \Tup{\Fail}{\Tup{\varepsilon}{\Mon{a}}}}} \mylabel{char.3}
\\ \\
& \textbf{Concatenation} \;\;\;
{\frac{G[p_1] \; xy \Lp \Tup{y}{(\Ex,L_1)}
 \fivespaces G[p_2] \; y \Lp \Tup{\Any}{(\Ey,L_2)}}
      {G[p_1 \; p_2] \; xy \Lp \Tup{\Any}{\Jj{(\Ex,L_1)}{(\Ey,L_2)}}}}   \mylabel{con.1} \\ \\
&\fivespaces 
{\frac{G[p_1] \; x \Lp \Tup{\Fail}{(\Ex,L)}}
      {G[p_1 \; p_2] \; x \Lp \Tup{\Fail}{(\Ex,L)}}} \mylabel{con.2}
\\ \\
& \textbf{Ordered Choice} \;\;\;
{\frac{G[p_1] \; xy \Lp \Tup{y}{(\Ex,L)}}
      {G[p_1 \;/\; p_2] \; xy \Lp \Tup{y}{(\Ex,L)}}} \mylabel{ord.1} \\ \\
&\fivespaces
{\frac{G[p_1] \; x \Lp \Tup{\Fail}{(\Ex,L_1)} \fivespaces G[p_2] \; x \Lp \Tup{\Any}{(\Ey,L_2)}}
      {G[p_1 \;/\; p_2] \; x \Lp \Tup{\Any}{\Jj{(\Ex,L_1)}{(\Ey,L_2)}}}}  \mylabel{ord.2}
\\ \\
& \textbf{Repetition} \;\;\;
{\frac{G[p] \; x \Lp \Tup{\Fail}{(\Ex,L)}}
      {G[p*] \; x \Lp \Tup{x}{(\Ex,L)}}} \mylabel{rep.1}
\\ \\
& \fivespaces {\frac{G[p] \; xyz \Lp \Tup{yz}{(\Ex,L_1)} \fivespaces G[p*] \; yz \Lp \Tup{z}{(\Ey,L_2)}}
      {G[p*] \; xyz \Lp \Tup{z}{\Jj{(\Ex,L_1)}{(\Ey,L_2)}}}} \mylabel{rep.2}
\\ \\
& \textbf{Negative Predicate} \;\;\;
{\frac{G[p] \; x \Lp \Tup{\Fail}{(\Ex,L)}}
      {G[!p] \; x \Lp \Tup{x}{(\Nil,\{\})}}} \mylabel{not.1}
\;\;\;
{\frac{G[p] \; xy \Lp \Tup{y}{(\Ex,L)}}
      {G[!p] \; xy \Lp \Tup{\Fail}{\Tup{xy}{\Mon{!p}}}}} \mylabel{not.2}
\end{align*}
}
\caption{Semantics of PEGs with farthest failure tracking and error lists}
\label{fig:semfarthestjoin}
\end{figure}

Rules {\bf empty.1} and {\bf char.1} do not change,
and rules {\bf char.2} and {\bf char.3} just 
return a list with the terminal that
they tried to match.

Rule {\bf var.1} uses an auxiliary function \Jv, defined
as follows:
\begin{align*}
\Jv((x,L),x,A) & =  (x,\{A\}) \\
\Jv((\Nil,\{\}),x,A) & =  (\Nil,\{\}) \\
\Jv((v,L),x,A) & =  (v,L) & \mathrm{where}\ x \neq v
\end{align*}

The idea behind \Jv\ is simple: if the right-hand side
of terminal $A$ has a farthest error position that is the
same as the current position in the input, it means that
we can treat this possible error as an error while
expecting $A$ itself. This is expressed in the first
case of \Jv\ by replacing the list of expected expressions
L with just $\{A\}$. The other cases just propagate the
failure information returned by $P(A)$.

To further understand how \Jv\ works, let us consider an example,
adapted from the grammar in Figure~\ref{fig:tiny}:
\begin{align*}
\it{Factor} & \leftarrow ``(" \; \it{Exp} \; ``)"
  \; / \; \it{Digit} \; \it{Digit}* \\
\it{Digit} & \leftarrow ``0" \;/\; ``1" \;/\; ``2"
\;/\; ``3" \;/\; ``4" \;/\; ``5" \;/\; ``6" \;/\; ``7"
\;/\; ``8" \;/\; ``9"
\end{align*}

When we try to match {\it Factor} against
an input that does not match either $``("$
or {\it Digit}, such as \texttt{id}, both
alternatives of the ordered choice fail and the farthest
failure position of both matches is the same input that
we were trying to match with {\it Factor}.
So, instead of keeping
an error list that would later give us an error message like
the following one:
\begin{verbatim}
   Unexpected 'id', expecting '(' or '0' or '1' or '2' or '3' or '4' or '5'
       or '6' or '7' or '8' or '9'
\end{verbatim}

Or even an error message like this one:
\begin{verbatim}
   Unexpected 'id', expecting '(' or Digit
\end{verbatim}

We replace the error list built during the matching of
the right-hand side of {\it Factor} with just $\{ {\it Factor} \}$.
From this new list we can get the following higher-level
error message:
\begin{verbatim}
   Unexpected 'id', expecting Factor
\end{verbatim}

If the failure occurred in the middle of the first
alternative of {\it Factor} (for example, because of a missing ``)',
we would keep the original error list instead of replacing
it with $\{ {\it Factor} \}$.

Using the names of the non-terminals in the error message instead
of a list of terminals that might have started this non-terminal
might not be better if the names of the non-terminals give no
indication to the user of what was expected. It is easy
to replace \Jv\ with just $(v?,L)$, propagating the failure
information from the expansion of the non-terminal and keeping
the error message just with symbols that the user can add
to the input. Another possibility is to have two kinds of
non-terminals, and use \Jv\ for informative non-terminals and
simple propagation for the others.

In the case of concatenation, a failure in the first
expression of the concatenation means we just need to
propagate the error list, so case {\bf con.2} is
straightforward. If both expressions succeed, we might
need to merge both error lists, if the two parts
of the concatenation have the same failure position.
Rule {\bf con.1} uses the auxiliary \J\ function, 
defined below:
\begin{align*}
\J((x,L),(\Nil,\{\})) & =  (x,L) \\
\J((\Nil,\{\}),(x,L)) & =  (x,L) \\
\J((xy,L_1),(y,L_2)) & =  (y,L_2) & \mathrm{where}\ x \neq \varepsilon \\
\J((y,L_1),(xy,L_2)) & =  (y,L_1) & \mathrm{where}\ x \neq \varepsilon \\
\J((x,L_1),(x,L_2)) & =  (x,L_1 \concat L_2)
\end{align*}

The first four cases of \J\ keep the furthest
and associated set of expected expressions. In the first
two cases, \Nil means that no error has occurred,
so any error automatically becomes the farthest
error position. In the next two cases, one of the
furthest error positions is a strict suffix of the
other, so is the furthest of the two and the other
is discarded. The only remaining possibility,
expressed in the last case, is that
the two positions are identical, and we merge their
expected sets.

The rules for ordered choice and repetition use the
same \J\ function, where applicable. Finally, the rules
for the syntactic predicate $!p$ also ignore the error
information inside $p$. Rule {\bf not.2} blames the
failure on the predicate itself.

Going back to our running example (Figure~\ref{fig:tinyerror}),
our error tracking semantics will give an error list that
lets us generate the following error message:
\begin{verbatim}
    factorial.tiny:6:1: syntax error, unexpected 'until',
                        expecting ';', '=', '<', '-', '+', '/', '*'
\end{verbatim}

The operators also end up in the error list because
their lexical rules all fail in the same position as
the semicolon. This error message is similar to the
error message we get using the error tracking
combinators that we implemented on LPeg, at the end of
Section~\ref{sec:pegs}.

We might try to tweak the error tracking heuristics of
repetition and ordered choice to ignore errors that happen
in the first symbol of the input, which would let us
take out the operators from the error list in our previous
example, and give a more succinct error message:
\begin{verbatim}
    factorial.tiny:6:1: syntax error, unexpected 'until', expecting ';'
\end{verbatim}

This heuristic is not sound in the general case, though.
Suppose we replace line 6 of Figure~\ref{fig:tinyerror}
with the following line:
\begin{verbatim}
    6 n ; until (n < 1);
\end{verbatim}

The tweaked heuristic would still produce an error list
with just the semicolon, which is clearly wrong, as the
problem is now a missing operator.

It is common in PEGs to mix scanning and parsing in the
same grammar, as syntactic predicates and repetition make
lexical patterns convenient to express. But this can lead
to problems in the automatic generation of error messages
because failures are expected while recognizing a token,
and these failures related to scanning can pollute the
error list that we use to generate the error message.
 
As an example, let us consider the lexical rule {\tt THEN}
from the PEG for the Tiny language of Section~\ref{sec:pegs}:
\[
  {\tt THEN} \leftarrow {\tt then} \; \it{!IDRest \; Skip}
\]

The pattern in the right-hand side fails if any alphanumeric
character follows {\tt then} in the input, putting the
predicate in the error list. The error will be reported to
the user as an unexpected character after {\tt then}
instead of a missing {\tt then} keyword.

One solution is to split the set of non-terminals into
{\em lexical} and {\em non-lexical} non-terminals.
Non-lexical non-terminals follow rule {\bf var.1}, but
lexical terminals follow a pair of new rules:
\begin{align*}
& 
{\frac{G[P(A)] \; xy \Lp \Tup{y}{\Tup{\Ex}{\List}}}
      {G[A] \; xy \Lp \Tup{y}{(\Nil,\{\})}}} \mylabel{lvar.1}\\ \\
& 
{\frac{G[P(A)] \; x \Lp \Tup{\Fail}{\Tup{\Ex}{\List}}}
      {G[A] \; x \Lp \Tup{\Fail}{(x, \{ A \})}}} \mylabel{lvar.2}
\end{align*}

Intuitively, a lexical non-terminal reports errors just like
a terminal. Any failures that happen in a successful match
are ignored, as they are considered to be expected, and a failed
match of the whole non-terminal gets blamed on the
non-terminal itself, at its starting position in the input.

All the extensions to the semantics of PEGs that we discussed
in this section are {\em conservative}: if a PEG fails with
some subject in the original semantics, it will also fail
in all of our extensions, and if a PEG matches some subject
and leaves a suffix, it will also match and leave the same
suffix in all of our extensions. This is stated by the
following lemma, where we use the symbol $\Lp$ to represent
the regular semantics of PEGs~\cite{mascarenhas2014}, and the symbol $\Lpp$ to represent
the extended semantics of PEGs presented in Figure~\ref{fig:semfarthest}:
\begin{lemma}[Conservativeness of farthest failure tracking]
Given a PEG $G$, a parsing expression $p$ and a subject $xy$, we have that 
$\Matg{p}{xy} \Lp y$ \,iff\; $\Matg{p}{xy} \Lpp \Tup{y}{\Ex}$,
and that 
$\Matg{p}{xy} \Lp \Fail$ \,iff\; $\Matg{p}{xy} \Lpp \Tup{\Fail}{\Ex}$.
\end{lemma}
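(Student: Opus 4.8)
The plan is to prove both directions of both biconditionals simultaneously by induction on the height of the derivation tree, using a single stronger statement that packages the four claims together: for every PEG $G$, parsing expression $p$, and subject $xy$, we have $\Matg{p}{xy} \Lp y$ iff there exists $\Ex$ with $\Matg{p}{xy} \Lpp \Tup{y}{\Ex}$, and $\Matg{p}{xy} \Lp \Fail$ iff there exists $\Ex$ with $\Matg{p}{xy} \Lpp \Tup{\Fail}{\Ex}$. I would first observe that in both semantics the relations are deterministic in an appropriate sense (each has exactly one derivable outcome, or none), and that the original semantics $\Lp$ is exactly what one obtains from the rules of Figure~\ref{fig:semfarthest} by erasing the second component of every result and deleting the auxiliary function \Suf. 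This ``erasure'' observation is the conceptual heart of the argument: every inference rule in Figure~\ref{fig:semfarthest} projects onto a corresponding inference rule of the regular PEG semantics, and the failure-tracking component never feeds back into the control flow (it is only ever produced, combined via \Suf, and propagated, but never inspected to decide which rule applies).

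For the forward direction ($\Lp \Rightarrow \Lpp$), I would do induction on the derivation of $\Matg{p}{xy} \Lp r$ (with $r \in \{y, \Fail\}$), casing on the last rule used, which corresponds to the syntactic form of $p$. For each case I apply the induction hypothesis to the premises to obtain the corresponding $\Lpp$ derivations for the subexpressions, then assemble the matching $\Lpp$ rule; the only extra work is supplying the failure-position component, which is always determined — either it is a literal ($\Nil$ for \textbf{empty.1}, \textbf{char.1}, \textbf{not.1}; the subject for \textbf{char.2}, \textbf{char.3}, \textbf{not.2}; or $x$ for \textbf{rep.1}) or it is $\Suff{\Ex}{\Ey}$ computed from the components already produced by the induction hypothesis. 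The reverse direction ($\Lpp \Rightarrow \Lp$) is the mirror image: induct on the $\Lpp$ derivation, erase the second components, and read off the $\Lp$ rule. One must check that the \Suf\ function is total on the arguments it receives (it takes two values each of which is a suffix of the input or $\Nil$, and it is defined on all such pairs), so that rule \textbf{con.1}, \textbf{ord.2}, and \textbf{rep.2} never get stuck for a reason absent from the original semantics; this is immediate from the definition of \Suf.

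The main obstacle — really the only place requiring care — is the repetition operator $p*$. In both semantics the rules for $p*$ are not syntax-directed in the naive sense: rule \textbf{rep.2} recursively mentions $p*$ again, so the derivation for $\Matg{p*}{xyz}$ can be arbitrarily tall relative to the size of $p$, and one must be sure the induction is well-founded. The standard fix, which I would adopt, is to make the induction on the height (or structure) of the derivation tree rather than on the structure of $p$, so that the inner $G[p*] \; yz \Lp \cdots$ premise of \textbf{rep.2} is a strictly smaller derivation and the induction hypothesis applies to it directly; the same applies to $G[P(A)]$ in \textbf{var.1}. A second, more subtle point for $p*$: the regular PEG semantics has its own repetition rules that must line up with \textbf{rep.1} and \textbf{rep.2} after erasure, and in particular one must confirm that when $p$ succeeds consuming nothing the situation is handled consistently — but since the excerpt's semantics is quoted from \cite{mascarenhas2014} and $\Lpp$ is built directly on top of it by the erasure correspondence, this alignment holds by construction and needs only to be stated. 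With the height-based induction in place, every remaining case is a routine rule-by-rule check of the erasure correspondence, and no genuinely hard estimate arises.
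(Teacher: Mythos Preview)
Your proposal is correct and takes essentially the same approach as the paper: induction on the height of the derivation trees, grounded in the observation that the farthest-failure component is purely auxiliary and never affects which rule fires. The paper's own proof is in fact a one-sentence version of exactly this argument, so your elaboration (the erasure correspondence, the totality of \Suf, and the care with \textbf{rep.2} and \textbf{var.1} requiring height- rather than structure-based induction) simply spells out the details the paper leaves implicit.
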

\begin{proof}
By induction on the height of the respective proof trees. 
The proof is straightforward, since the only difference
between the rules of $\Lp$ and $\Lpp$ is presence of
the farthest failure position, but this position has
no influence on whether the expression successfully
consumes part of the input or fails.
\end{proof}

A similar Lemma for the semantics with expected expression lists
of Figure~\ref{fig:semfarthestjoin} is also straightforward to
prove, where we use the symbol $\Lpl$ to represent those extended
semantics:
\begin{lemma}[Conservativeness of farthest failure tracking with expected lists]
	Given a PEG $G$, a parsing expression $p$ and a subject $xy$, we have that 
	$\Matg{p}{xy} \Lp y$ \,iff\; $\Matg{p}{xy} \Lpl \Tup{y}{(\Ex, L)}$,
	and that 
	$\Matg{p}{xy} \Lp \Fail$ \,iff\; $\Matg{p}{xy} \Lpl \Tup{\Fail}{(\Ex,L)}$.
\end{lemma}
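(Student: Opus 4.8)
The plan is to mirror the proof of the previous lemma, establishing the four implications that the two biconditionals unfold into (reading the right-hand sides existentially in the forward directions) by a single induction on the height of the derivation trees. The crucial observation, exactly as before, is that the rules of $\Lpl$ in Figure~\ref{fig:semfarthestjoin} are in one-to-one correspondence with the rules of the regular semantics $\Lp$: they carry the same names, have the same premises once the error pairs are erased, and compute the first component of each conclusion (a suffix of the input or $\Fail$) by exactly the same recipe. Hence a derivation in one system can be turned into a derivation in the other by decorating or erasing the second components, leaving the ``success/failure skeleton'' untouched.

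Concretely, for the forward direction ($\Lp \Rightarrow \Lpl$) I would proceed by case analysis on the last rule of the given $\Lp$ derivation. The four axioms \textbf{empty.1}, \textbf{char.1}, \textbf{char.2}, \textbf{char.3} are immediate: the corresponding $\Lpl$ axiom fires with a fixed error pair, namely $(\Nil,\{\})$ in the success cases and the subject paired with the terminal that was tried (e.g.\ $\Tup{ax}{\Mon{b}}$) in the failure cases. For each inductive rule (\textbf{var.1}, \textbf{con.1}, \textbf{con.2}, \textbf{ord.1}, \textbf{ord.2}, \textbf{rep.1}, \textbf{rep.2}, \textbf{not.1}, \textbf{not.2}) I would apply the induction hypothesis to each premise, obtaining $\Lpl$ derivations whose first components agree with those of the $\Lp$ premises, and then apply the homonymous $\Lpl$ rule; since the first component of its conclusion is computed from the first components of the premises in the same way as in $\Lp$, it matches what the $\Lp$ rule produced. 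The backward direction ($\Lpl \Rightarrow \Lp$) is the mirror image: erase every error pair, and each $\Lpl$ rule projects onto its $\Lp$ counterpart.

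The one point that needs a little care — and the closest thing to an obstacle — is the well-definedness of the auxiliary functions $\Jv$, $\J$, and $\Suf$ invoked by rules \textbf{var.1}, \textbf{con.1}, \textbf{ord.2}, and \textbf{rep.2}. The case analysis defining $\J$ is exhaustive only under the assumption that, when two error positions are both suffixes of the same string, one is a suffix of the other; this rests on the invariant that in any judgement $\Matg{p}{w} \Lpl \Tup{\Any}{(\Ex,L)}$ the value $\Ex$ is either $\Nil$ or a suffix of $w$. I would dispatch this with a separate, trivial induction on the height of $\Lpl$ derivations, checking that each rule preserves the invariant — for instance, in \textbf{con.1} the position returned by $p_2$ is a suffix of $y$, hence of $xy$, so it is comparable with the position returned by $p_1$ and $\J$ applies. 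With this invariant in hand the main induction goes through exactly as sketched, and the proof is, as the authors note for the simpler lemma, straightforward.
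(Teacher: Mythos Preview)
Your proposal is correct and follows the same approach as the paper: induction on the height of the derivation trees, relying on the observation that the $\Lpl$ rules differ from the $\Lp$ rules only in the extra error-pair component, which never influences the first component of the result. In fact your write-up is considerably more detailed than the paper's one-sentence justification; your side remark about the well-definedness of $\J$ and $\Jv$ (requiring the invariant that returned error positions are suffixes of the current subject) is a valid point the paper leaves implicit.
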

\begin{proof}
By induction on the height of the respective proof trees. 
The proof is also straightforward, since the only difference
between the rules of $\Lp$ and $\Lpl$ is presence of
the farthest failure position and list of expected expressions, 
but this extra information has
no influence on whether the expression successfully
consumes part of the input or fails.
\end{proof}

\end{document}